%% file: ACC_with_IDS.tex
\documentclass[journal]{IEEEtran}

\AtBeginDocument{%
  \providecommand\BibTeX{{%
    \normalfont B\kern-0.5em{\scshape i\kern-0.25em b}\kern-0.8em\TeX}}}

\usepackage[utf8]{inputenc}
\usepackage{xcolor}
\usepackage{amsthm }
\usepackage{hyperref}
\usepackage{color}
\usepackage{listings}
\usepackage{acronym}
\usepackage[T1]{fontenc}
\usepackage{multirow}
\usepackage{colortbl}
\usepackage{todonotes}
\usepackage{url}
\usepackage[export]{adjustbox}
\usepackage{colortbl}
\usepackage{tcolorbox}
\usepackage{lipsum}
\usepackage{multicol}
\usepackage{float}
\usepackage{subcaption}
\usepackage{enumerate}
\usepackage{amsmath}
\usepackage{amssymb}

\usepackage{algorithm}
\usepackage{algpseudocode}

\definecolor{lightgray}{rgb}{.9,.9,.9}
\definecolor{darkgray}{rgb}{.4,.4,.4}
\definecolor{darkgreen}{rgb}{0, 0.39, 0.00}
\definecolor{Gray}{gray}{0.7}
\definecolor{codegreen}{rgb}{0,0.6,0}
\definecolor{codegray}{rgb}{0.5,0.5,0.5}
\definecolor{codepurple}{rgb}{0.58,0,0.82}
\definecolor{backcolour}{rgb}{0.95,0.95,0.92}

\lstdefinestyle{mystyle}{
    backgroundcolor=\color{backcolour},   
    commentstyle=\color{codegreen},
    keywordstyle=\color{magenta},
    numberstyle=\tiny\color{codegray},
    stringstyle=\color{codepurple},
    basicstyle=\ttfamily\footnotesize,
    breakatwhitespace=false,         
    breaklines=true,                 
    captionpos=b,                    
    keepspaces=true,                 
    numbers=left,                    
    numbersep=5pt,                  
    showspaces=false,                
    showstringspaces=false,
    showtabs=false,                  
    tabsize=2
}

\definecolor{codegreen}{rgb}{0,0.6,0}
\definecolor{codegray}{rgb}{0.5,0.5,0.5}
\definecolor{codepurple}{rgb}{0.58,0,0.82}
\definecolor{backcolour}{rgb}{0.95,0.95,0.92}
\definecolor{lightgray}{rgb}{.9,.9,.9}
\definecolor{darkgray}{rgb}{.4,.4,.4}
\definecolor{darkgreen}{rgb}{0, 0.39, 0.00}
\definecolor{Gray}{gray}{0.7}

\lstdefinestyle{mystyle}{
    backgroundcolor=\color{backcolour},   
    commentstyle=\color{codegreen},
    keywordstyle=\color{magenta},
    numberstyle=\tiny\color{codegray},
    stringstyle=\color{codepurple},
    basicstyle=\ttfamily\footnotesize,
    breakatwhitespace=false,         
    breaklines=true,                 
    captionpos=b,                    
    keepspaces=true,                 
    numbers=left,                    
    numbersep=5pt,                  
    showspaces=false,                
    showstringspaces=false,
    showtabs=false,                  
    tabsize=2https://www.overleaf.com/project
}

\newtheorem{theorem}{Theorem}[section] 
\newtheorem{lemma}[theorem]{Lemma}

\newtheorem{assumption}{Assumption}

\include{acronyms}

\begin{document}

\title{Extending Adaptive Cruise Control with Machine Learning Intrusion Detection Systems}

\author{Lotfi Ben Othmane,
        Yasaswini Konapalli,
        Naga Prudhvi Mareedu}


\maketitle

\begin{abstract}

\input{Sections/abstract}

\end{abstract}

\begin{IEEEkeywords}
Adaptive Cruise Control, Cybersecurity, Intrusion Detection System and Advanced Driver Assistance Systems (ADAS)
\end{IEEEkeywords}

\section{Introduction}
\input{Sections/Introduction}

\section{Overview of the Adaptive Cruise Control Model}\label{sec:ACCModel}
\input{Sections/ACCModel}

\section{Related work}~\label{sec:relworks}
\input{Sections/Relatwork}

\section{Limitation of Adaptive Cruise Control with Kalman Filter }~\label{sec:limitationACCKL}

\input{Sections/LimitationKFACC}

\section{Extending Adaptive Cruise Control with Intrusion Detection Systems}~\label{sec:IDSACC}

\input{Sections/IDSACC}

\section{Simulation of the proposed ACC-IDS model}~\label{sec:Simulation}
\input{Sections/Simulation}

\section{Conclusion}~\label{sec:Conclusions}
\input{Sections/Conclusion}

\section*{Acknowledgment}
During the preparation of this work the authors used ChatGPT to help developing the simulation code and to improve the composition of the paper.

\bibliographystyle{IEEEtran}
\bibliography{Bibliography}


\end{document}

%% file: acronyms.tex
\acrodef{ECU}{Electronic Control Unit}
\acrodef{CACC}{Cooperative Adaptive Cruise Control}
\acrodef{CC}{Cruise Control}
\acrodef{RSU}{Road Side Unit}
\acrodef{SC}{Service Center}
\acrodef{ITS}{Intelligent Transportation System}
\acrodef{OBU}{On-Board Unit}
\acrodef{OBD}{On-Board Diagnostics}
\acrodef{CAN}{Controller Area Network}
\acrodef{DoS}{Denial of Service}
\acrodef{AAA}{Automobile Association of America}
\acrodef{V2V}{Vehicle-to-Vehicle}
\acrodef{V2I}{Vehicle-to-Infrastructure}
\acrodef{PID}{Priority ID}
\acrodef{CIDS}{Clock-based IDS}
\acrodef{RLS}{Recursive Least Squares}
\acrodef{IDS}{Intrusion Detection System}
\acrodef{IVI}{In Vehicle Infotainment}
\acrodef{HMM}{Hidden Markov Model}
\acrodef{ML}{Machine Learning}
\acrodef{SAE}{Society of Automotive Engineers}
\acrodef{RPM}{Revolutions Per Minute}
\acrodef{CSMA/CD}{Carrier Sense Multiple Access with Collision Detection} 
\acrodef{ML}{Machine Learning}
\acrodef{NN}{Neural Networks}
\acrodef{TPM}{Tire Pressure Monitoring System}
\acrodef{GPS}{Global Positioning System}
\acrodef{DNN}{Deep Neural Network}
\acrodef{LSTM}{Long Short-Term Memory}
\acrodef{MSG}{Messages-Sequence Graph}
\acrodef{CPD}{Change Point Detection}
\acrodef{RNN}{Recurrent Neural Networks}
\acrodef{GID}{Generative Adversarial Nets based Intrusion  Detection System}
\acrodef{CNN}{Convolution Neural Networks}
\acrodef{GRU}{Gated Recurrent Unit}
\acrodef{GAN}{Generative Adversarial Nets}
\acrodef{OBD}{On-Board Diagnostics}
\acrodef{RT}{Real-Time}
\acrodef{ACC}{Adaptive Cruise Control }
\acrodef{DDoS}{Distribute Denial of Service}
\acrodef{CPS}{Cyber Physical System}
\acrodef{IDS}{Intrusion Detection System}
\acrodef{CAN}{Controller Area Network}
\acrodef{HMM}{Hodden Markov Model}
\acrodef{SAE}{Society of Automotive Engineers}
\acrodef{IPC}{Inter-Process Communication}
\acrodef{PID} {Proportional-Integral-Derivative}
\acrodef{ADAS}{Advanced Driver-Assistance System}
\acrodef{MPC}{Model Predictive Controller}
\acrodef{DOS}{Denial of Service}
\acrodef{IQR}{linear-quadratic optimal control}
\acrodef{LKA}{Lane Keeping Assist}
\acrodef{ACC-IDS}{Intrusion Detection Systems-Based Adaptive Cruise Control}
\acrodef{KF}{Kalman filter}
\acrodef{EKF}{Extended and Unscented Kalman Filters}
\acrodef{UKF}{Extended and Unscented Kalman Filters}
\acrodef{FDI}{False Data Injection}
\acrodef{CMDP}{Constrained Markov Decision Process}

%% file: Sections/abstract.tex
An \ac{ACC} system automatically adjusts the host vehicle’s speed to maintain a safe following distance from a lead vehicle. In typical implementations, a feedback controller (e.g., a \ac{PID} controller) computes the host vehicle’s acceleration using a target speed and a spacing error, defined as the difference between the measured inter-vehicle distance and a desired safe distance. \ac{ACC} is often assumed to be resilient to fault-injection attacks because a \ac{KF} can smooth noisy speed measurements. However, we show—through analytical proofs and simulation results—that a \ac{KF} can tolerate injected speed values only up to a bounded threshold. When injected values exceed this threshold, the filter can be driven off track, causing the \ac{ACC} controller to make unsafe acceleration decisions and potentially leading to collisions. Our main contribution is to augment the \ac{PID}-based controller with \ac{IDS} outputs, yielding \ac{ACC-IDS}. The \ac{ACC-IDS} controller is simple and implementable: a binary intrusion flag switches the control law to emergency braking. We prove that augmenting \ac{ACC} with an \ac{IDS}, under assumed detection-performance and latency constraints, can mitigate these attacks and help preserve \ac{ACC}’s collision-avoidance guarantees.

%% file: Sections/Introduction.tex
\acf{ACC} is a driver-assistance system designed to help vehicles maintain safe and efficient longitudinal control on the road. The system uses onboard sensors—such as radar, LiDAR, or cameras—to measure the distance to nearby vehicles and obstacles. Using this sensor measurements, \ac{ACC} regulates the host vehicle’s speed to match a driver-selected value while ensuring a safe following distance from leading vehicles. To achieve this, the controller considers inputs such as the desired speed, the vehicle’s actual speed, and the measured inter-vehicle distance. From these inputs, \ac{ACC} computes a safe distance based on vehicle dynamics and adjusts the throttle or brakes to reduce tracking error.

Cyberattacks on modern vehicles have progressed from controlled laboratory demonstrations to real-world incidents, driven largely by the increasing number of software and communication vulnerabilities in automotive systems~\cite{lee2022evaluation,UpstreamAuto2020}. These vulnerabilities arise in components such as electronic control units, in-vehicle networks, and infotainment systems, providing multiple attack surfaces for adversaries~\cite{Othmane2015}. The consequences range from vehicle theft to the induction of hazardous driving conditions, including traffic accidents~\cite{6728305}. In human-in-the-loop scenarios, drivers are expected to recognize abnormal or unsafe behavior and intervene when necessary, underscoring the importance of understanding how automated vehicle systems interact with human operators.

As autonomous and semi-autonomous vehicles become more widespread, systems like \ac{ACC} are essential for automatically regulating vehicle speed and headway~\cite{rajamani2012vehicle}. Conventional \acf{CC} maintains a constant driver-selected speed by adjusting the throttle or brakes based on speed error. \ac{ACC} extends \ac{CC} by additionally incorporating the distance to a lead vehicle, as illustrated in Figure~\ref{fig:ACCoverview}. To fulfill these objectives, \ac{ACC} typically employs a hierarchical control architecture: an upper-level controller computes the desired longitudinal acceleration to maintain safe spacing or constant speed, and a lower-level actuator controller converts this acceleration command into throttle or brake signals~\cite{rajamani2012vehicle}.

\begin{figure}
\includegraphics[width=1.0\linewidth]{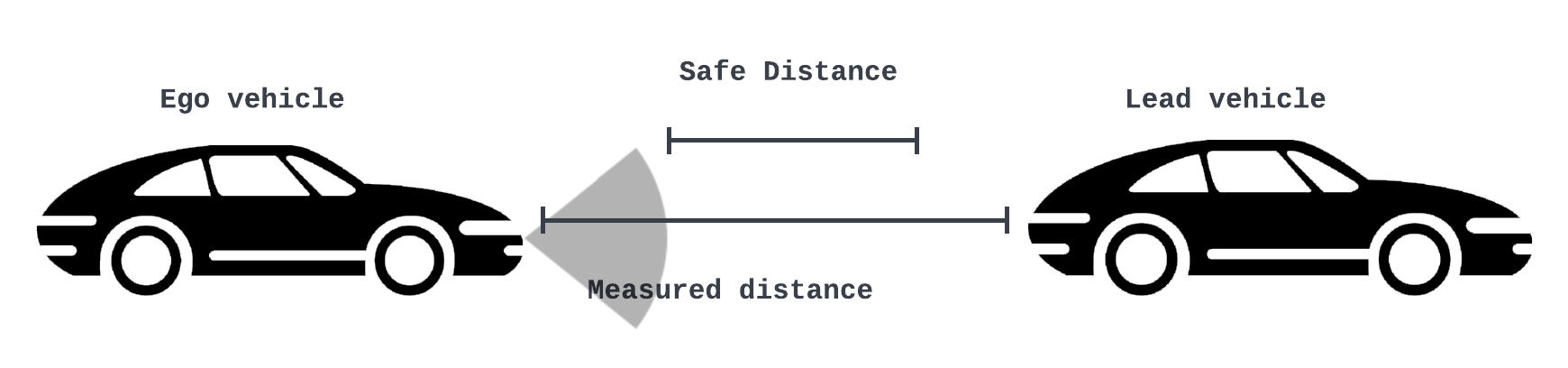}
\caption{The \ac{ACC} system uses radar or LiDAR to measure the distance to the lead vehicle and compares it to a safe following distance.}
\label{fig:ACCoverview}
\end{figure}

Adversaries often probe these systems to learn their behavior and identify weaknesses. For example, to manipulate the vehicle’s perceived speed, an attacker may repeatedly inject falsified speed measurements, observe the controller’s response, and adapt the forged signals over time. Controllers commonly use estimators—such as the \acf{KF}~\cite{9000}—to reject abrupt or implausible measurement jumps. While effective against random noise, such estimators are less robust to strategically crafted, gradual spoofing attacks~\cite{9076852}.

The upper-level \ac{ACC} controller computes acceleration commands from an error signal that captures the deviation between the desired and actual spacing and/or speed. This logic is commonly implemented using a \ac{PID} controller~\cite{rajamani2012vehicle} and a \ac{KF} to address speed sensor noise and compute vehicle acceleration. A \ac{KF} is optimal under its modeling assumptions and is known to be effective at suppressing random fluctuations in measured data. A spoofing attacker, however, can violate these assumptions by injecting biased or adaptively crafted measurements that systematically distort the estimate. For example, in~\cite{JOS2024} we simulated speed and RPM spoofing attacks in the CARLA simulator~\cite{Dosovitskiy17}. The results show that spoofing the host vehicle’s speed to a fixed value of 90km/h consistently resulted in a collision, whereas spoofing to lower target speeds, such as 60km/h, did not lead to accidents.

When an adversary is able to manipulate the measurement channel in a sustained or adaptive manner, estimation-based techniques alone are insufficient to guarantee safety. In contrast, an intrusion detection system operates independently of the state estimator and does not attempt to reconstruct the true vehicle state. Instead, it focuses on identifying deviations from expected behavioral patterns in sensor readings and in-vehicle communication. Machine-learning-based IDS approaches are particularly effective in this setting because they can capture complex temporal relationships and contextual correlations that characterize normal system behavior, enabling the detection of gradual and stealthy spoofing attacks that may evade model-based or threshold-driven filters. Crucially, the output of an IDS can be directly integrated into the control logic, allowing the controller to transition to a conservative, fail-safe mode when an attack is detected, thereby preserving safety even when the reliability of state estimation is compromised.

The contributions of this paper are:
\begin{enumerate}
\item Demonstrate the limitations of \ac{ACC} systems that rely on \ac{KF}-based estimation in preventing collisions under cyberattacks.
\item Evaluate the effectiveness of augmenting \ac{ACC} with a machine-learning-based \ac{IDS} to mitigate such attacks, both analytically and through simulation.
\end{enumerate}

The remainder of this paper is organized as follows. Section~\ref{sec:ACCModel} introduces the \ac{ACC} control framework. Section~\ref{sec:relworks} reviews related work. Section~\ref{sec:limitationACCKL} presents the limitations of \ac{KF}-based \ac{ACC} under spoofing attacks. Section~\ref{sec:IDSACC} evaluates machine-learning-based \ac{IDS} integration for attack mitigation. Section~\ref{sec:Simulation} discusses the results of simulating \ac{ACC} with \ac{KF}, fault-injection attacks, and with \ac{IDS}. Section~\ref{sec:Conclusions} concludes the paper.

%% file: Sections/ACCModel.tex
This section describes the longitudinal control structure of automotive \ac{CC} and \ac{ACC} systems following the formulation in~\cite{rajamani2012vehicle}. Conventional \ac{CC} maintains a driver-selected reference speed by regulating the throttle and brake based on speed error. \ac{ACC} extends this by also considering the distance to a lead vehicle, as illustrated in Figure~\ref{fig:ACCoverview}.

\ac{ACC} typically employs a hierarchical control architecture (see Figure~\ref{fig:ACC_controllers}). The upper-level controller computes the desired longitudinal acceleration required either to track the reference speed (in cruise mode) or maintain a safe following distance (in following mode).  
The lower-level controller translates the desired acceleration into physical actuator commands for throttle and braking~\cite{rajamani2012vehicle}.

\begin{figure}[bt]
    \centering
    \includegraphics[width=0.6\linewidth]{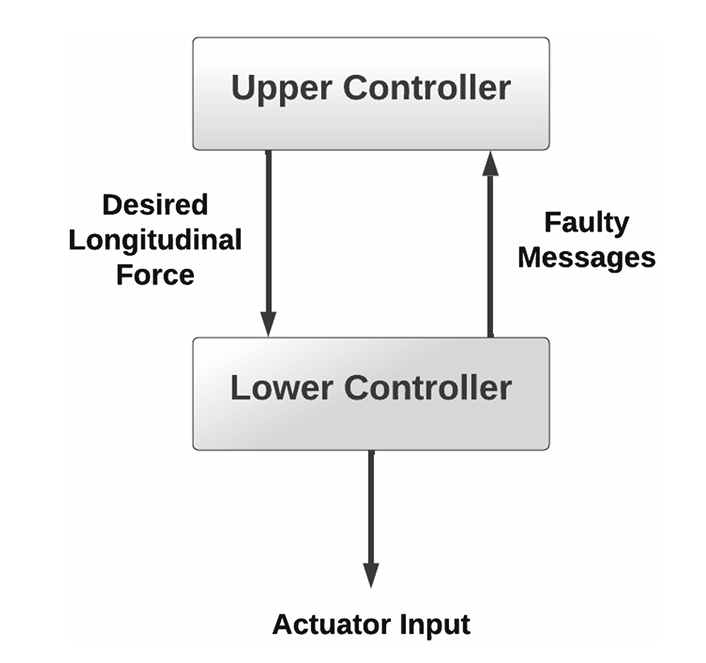}
    \caption{Upper- and lower-level controllers in the \acf{ACC} architecture~\cite{JOS2024}.}
    \label{fig:ACC_controllers}
\end{figure}

\paragraph{Safe Following Distance}

The desired inter-vehicle distance is computed as~\cite[p.207]{greenbook}:\footnote{This is a simplified formula as it groups the constants in the original one.}

\begin{equation}\label{eq:safedistance}
d_{safe}(t) = h\, v_h(t)
+ \frac{v_h(t)^2}{2a}
\end{equation}

The first term of Equation~\ref{eq:safedistance} models the reaction-distance traveled during human braking delay and the second term models the physical braking distance. The constant $h$ is the headway time (typically $\sim$1.5--2.5\,s), $v_h(t)$ is the host velocity in m/s, and $a$ is the assumed comfortable deceleration rate (typically $\sim$3.4\,m/s$^2$), and 2 accounts for kinematic constants rolled together~\cite{greenbook}. 

\begin{table}[bt]
    \centering
    \caption{Representative PID controller gains for \ac{ACC} (adapted from~\cite{rajamani2012vehicle}).}
    \label{tab:PIDConstants}
    \begin{tabular}{|p{1.7in}|p{0.35in}|p{0.35in}|p{0.35in}|}
        \hline
        \rowcolor{gray!18}
        Context & $K_p$ & $K_i$ & $K_d$ \\ \hline
        Mid-size passenger car $(m\approx1500\,\text{kg})$ 
        & 0.2--1.0 & 0.05--0.2 & 0.1--0.5 \\ \hline
        Speed-control mode (gentler dynamics) 
        & 0.2 & 0.1 & 0 \\ \hline
        Spacing-control mode (more responsive) 
        & 0.5 & 0.05 & 0.2 \\ \hline
    \end{tabular}
\end{table}

\paragraph{Tracking Errors}

Two tracking errors are considered:
\begin{align}
e_v(t) &= v_{ref}(t) - \hat v_h(t), \label{eq:speederror}\\[4pt]
e_d(t) &= d(t) - d_{safe}(t), \label{eq:spaceingerror}
\end{align}
representing speed error (Eq.~\ref{eq:speederror}) and spacing error (Eq.~\ref{eq:spaceingerror}), respectively. Note that Equation~\ref{eq:speederror} uses the host speed estimated using the Kalman filter ($\hat v_h$).  

The controller selects the relevant error type based on whether a lead vehicle is detected or not as specified in the following equations:
\begin{equation}
e(t)=
\begin{cases}
e_d(t), & \text{if a lead vehicle is present},\\[2pt]
e_v(t), & \text{otherwise}.
\end{cases}
\label{eq:errorswitch}
\end{equation}

\paragraph{Control Objectives}

The \ac{ACC} seeks to minimize the errors as specified in the following equations.

\begin{align}
\lim_{t\to\infty} e_v(t) &= 0, \label{eq:CruiseModeOptimal}\\
\lim_{t\to\infty} e_d(t) &= 0,\qquad d(t)\ge d_{safe}(t), \label{eq:SpaceModeOptimal}
\end{align}

Equations~\ref{eq:CruiseModeOptimal} and~\ref{eq:SpaceModeOptimal} correspond to speed tracking and safe following distance errors, respectively. It achieves this by adjusting the vehicle acceleration, $u(t)$, using controllers such as the \ac{PID} controller.

\paragraph{PID Controller}

A PID controller is commonly used for the upper-level control law:
\begin{equation}
u(t)
= K_p e(t)
+ K_i \int_0^t e(\tau)\,d\tau
+ K_d \frac{d e(t)}{dt},
\label{eq:PID}
\end{equation}
where $u(t)$ denotes the desired longitudinal acceleration.  
Typical ranges for PID gains are shown in Table~\ref{tab:PIDConstants}; these values depend on vehicle mass, actuator dynamics, headway selection, and comfort/safety trade-offs.

\begin{figure}[h]
    \centering
    \includegraphics[width=1\linewidth]{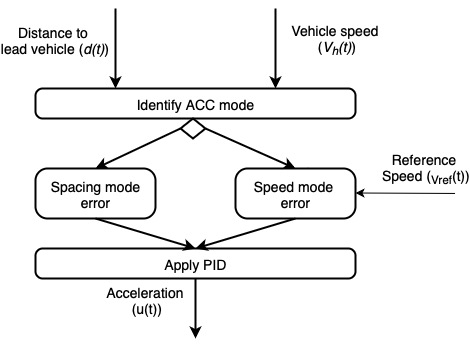}
    \caption{Block diagram of the \acf{ACC} system.}
    \label{fig:ACCBlockDiagram}
\end{figure}

Let's put things together. Figure~\ref{fig:ACCBlockDiagram} illustrates the overall signal flow in \ac{ACC}.  
The controller receives as inputs the measured distance to the lead vehicle $d(t)$, the estimated host speed $v_h(t)$, and the driver-selected reference speed $v_{ref}(t)$, and produces a longitudinal acceleration command $u(t)$.

\ac{ACC} alternates between speed regulation and spacing control using a PID-based upper-level controller, with gains tuned to ensure stability, safety, and passenger comfort~\cite{ploeg2011design}.

%% file: Sections/Relatwork.tex
Jedh \emph{et al.}~\cite{JOS2024} employed the CARLA simulator~\cite{Dosovitskiy17} to model a host vehicle equipped with adaptive cruise control (\ac{ACC}) following a lead vehicle and evaluated crash-avoidance behavior under normal and adversarial conditions. The authors introduced speed- and RPM-spoofing attacks in which falsified speed measurements are propagated over the vehicle’s Controller Area Network (\ac{CAN}) bus. Their results showed that speed spoofing can mislead the \ac{ACC} controller into accelerating when deceleration is required, thereby increasing the risk of rear-end collisions. To mitigate this risk, the \ac{ACC} system was augmented with a real-time, machine-learning-based intrusion detection system (\ac{IDS}) that triggers emergency braking when an imminent collision is detected, demonstrating that IDS-assisted braking can significantly reduce the safety impact of spoofing attacks.

Sardesai \emph{et al.}~\cite{8656970} investigated the relationship between cybersecurity and safety in \ac{CACC} systems by analyzing adversarial manipulation of \ac{V2V} communication in a highway merging scenario. The authors developed an abstract \ac{CACC} model in MATLAB and simulated passive attack strategies, including message injection and denial-of-service–like disruptions that corrupt transmitted speed and position data. System behavior was evaluated using safety-critical indicators such as collision occurrence, false alarms, and missed hazard detection. The results showed that even a limited fraction of corrupted \ac{V2V} messages can lead to incorrect control actions and a non-negligible probability of collisions, emphasizing the importance of secure and reliable inter-vehicle communication.

Guo \emph{et al.}~\cite{GGSQ2020} proposed a multi-objective \ac{ACC} strategy based on \ac{MPC} that simultaneously considers safety, car-following performance, ride comfort, and fuel efficiency. The approach models longitudinal vehicle dynamics using higher-order kinematic states, including inter-vehicle distance, relative velocity, acceleration, and jerk, and introduces constraint softening to address infeasible operating conditions. Simulation results using MATLAB/Simulink and CarSim demonstrated improved robustness, tracking accuracy, and stability compared to conventional \ac{MPC}-based \ac{ACC} controllers.

Bersani \emph{et al.}~\cite{8804527} addressed vehicle state estimation for advanced driver-assistance systems and autonomous driving by estimating key motion states such as longitudinal and lateral velocities, heading angle, and absolute position. The authors developed adaptive \ac{EKF} and \ac{UKF} formulations based on a kinematic single-track vehicle model, reducing dependence on uncertain vehicle parameters while preserving real-time feasibility. Sensor fusion was achieved using inertial sensors, wheel-speed encoders, steering-angle sensors, and dual GPS receivers operating at different sampling rates. Experimental results from urban driving scenarios confirmed that both \ac{EKF} and \ac{UKF} provide accurate and stable state estimates suitable for real-time vehicle control.

Wu \emph{et al.}~\cite{8759909} proposed a \ac{CACC} strategy that maintains longitudinal control performance during extended communication loss by estimating the preceding vehicle’s acceleration instead of reverting to conventional \ac{ACC}. The method incorporates an adaptive Kalman filter within a predecessor–following \ac{CACC} architecture, enabling reconstruction of missing feedforward acceleration signals when \ac{V2V} communication is unavailable. Compared to traditional Kalman filters based on the Singer model, the proposed approach adaptively updates the acceleration mean using past estimates, reducing control oscillations and improving estimation accuracy. Simulation and experimental validation on mobile robot platoons demonstrated significant reductions in mean and RMS inter-vehicle distance errors.

Sargolzaei \emph{et al.}~\cite{8894512} presented a secure control framework for networked control systems that jointly detects and mitigates \ac{FDI} attacks in real time. The framework integrates a Kalman filter–based state observer with a neural network whose parameters are updated using an extended Kalman filter, enabling simultaneous estimation of system states and malicious inputs under process and measurement noise. By embedding attack estimates directly into a resilient control law, the approach compensates for uncertainties and unknown inputs without requiring controller reconfiguration. Simulation results on a multi-area power system demonstrated improved robustness and detection accuracy compared to conventional observer-based techniques.

Recent research has also focused on enhancing decision-making robustness under explicit safety constraints. Zhao \emph{et al.} proposed a safe reinforcement learning framework for autonomous highway driving by formulating the driving problem as a constrained \ac{CMDP}. To balance safety and learning efficiency, the authors introduced a Replay Buffer Constrained Policy Optimization algorithm that employs importance sampling to stabilize training. Simulation results demonstrated faster convergence, improved robustness to environmental uncertainties, and zero-collision performance in high-risk highway scenarios, motivating the integration of safety-aware mechanisms in autonomous driving systems such as IDS-enhanced \ac{ACC}~\cite{rad2020experimental2}.

Zhang \emph{et al.}~\cite{ZYZQ2025} proposed a machine-learning-based framework for detecting and mitigating speed fault-injection, \ac{DoS}, and spoofing attacks in \ac{ACC} systems. Synthetic datasets were generated to model various attack scenarios, and multiple classical machine learning algorithms, including support vector machines, were evaluated for attack detection. System resilience was assessed by verifying preservation of the \ac{ACC} time headway between host and lead vehicles during and after attacks, demonstrating the effectiveness of data-driven detection approaches for improving \ac{ACC} safety under adversarial conditions.

%% file: Sections/LimitationKFACC.tex
\begin{table}[bt]
    \centering
    \caption{Kalman filter notation for host-speed estimation.}
    \label{tab:Notation}
    \begin{tabular}{|p{0.7in}|p{2.4in}|}\hline
    \rowcolor{gray!18}
    Variable & Description \\\hline
             $\Delta t$ & Sampling period (instantiated to 1)\\\hline
       $v_h(t)$ & Effective host-vehicle speed at time $t$ (in m/s) \\\hline
       $u(t)$ & Host-vehicle acceleration at time $t$\\\hline
        $z(t)$ & Measured (noisy) speed at time $t$\\\hline
        $\hat v_h(t|t)$   & Updated (posterior) host-speed estimate at time $t$\\\hline
        $\hat v_h(t+ \Delta t|t)$ & Predicted host speed at time $t+\Delta t$ using measurements up to $t$\\\hline
        $\hat v_h(t+ \Delta t|t+ \Delta t)$ & Updated estimate at time $t+ \Delta t$ after incorporating the new measurement\\\hline 
        $n(t)$& Process noise at time $t$\\ \hline
        $Q$& Process-noise variance\\ \hline
        $m(t)$& Measurement noise at time $t$\\\hline
        $R$& Measurement-noise variance\\\hline
     $v_l(t)$ & Lead-vehicle speed at time $t$, assumed constant over one sampling step (in m/s)\\\hline
     $d(t)$ & Gap distance between the host and the lead vehicle at time $t$\\\hline
     $a$ & Comfortable braking acceleration \\\hline
     $v_h^*(t)$ & Simplified notation for $v_h(t+1)$\\\hline
    $\hat v_h^{-}(t)$ & Simplified notation for $\hat v_h(t+ \Delta t \mid t)$\\\hline
    $\hat v_h^{+}(t)$ & Simplified notation for $\hat v_h(t+ \Delta t \mid t+ \Delta t)$\\\hline
    \end{tabular}
\end{table}

The goal of this section is to characterize how combinations of (i) the measured host-vehicle speed after application of the \ac{KF} and (ii) the lead-vehicle speed can drive the \ac{ACC} system into an unsafe state, potentially leading to a collision. We decompose the analysis into two subproblems: (1) identifying the host-vehicle speed threshold $v_{\text{thr}}$ at which the \ac{ACC} safe-distance constraint is violated, and (2) deriving a condition on the \emph{measured} host-vehicle speed (in the presence of a \ac{KF}) under which the estimated speed exceeds $v_{\text{thr}}$. We first provide an overview of the \ac{KF} and then address these two subproblems.

Table~\ref{tab:Notation} summarizes the notation used throughout this section.

We make the following assumptions:

\begin{assumption}\label{as:1}
We consider a simple longitudinal model in discrete time with sampling period $\Delta t$.
\end{assumption}

\begin{assumption}\label{as:2}
The host-vehicle longitudinal dynamics are given by
\[
v_h(t+\Delta t) = v_h(t) + u(t)\,\Delta t,
\]
where $u(t)$ denotes the longitudinal acceleration.
\end{assumption}

\subsection{Prediction of Host-Vehicle Speed Using the Kalman Filter}
\label{sec:KF-prediction}

To estimate the host-vehicle speed $v_h(t)$, we use a standard discrete-time Kalman filter based on a constant-velocity motion model. The state is the host speed, i.e., $x(t)=v_h(t)$, and the measurement $z(t)$ is a noisy speed observation obtained, for example, from wheel-speed sensors, GPS, or a virtual sensor. The \ac{KF} model~\cite{9000} is
\begin{align}
v_h(t+ \Delta t) &= v_h(t) + n(t), \\
z(t) &= v_h(t) + m(t),
\end{align}
where $n(t)$ and $m(t)$ denote the process and measurement noise, respectively, and are typically modeled as zero-mean Gaussian random variables.

\paragraph{Step 1: Prediction to time $t+ \Delta t$}

Under the constant-velocity assumption, the one-step-ahead prediction equals the corrected estimate at the current step:
\begin{equation}\label{eq:KLVt+1}
\hat v_h(t+1 \mid t) = \hat v_h(t \mid t).
\end{equation}

The covariance propagates as
\begin{equation}
P(t+ \Delta t \mid t) = P(t \mid t) + Q,
\end{equation}
where $Q$ is the process-noise variance that captures uncertainty due to unmodeled speed variations.

\paragraph{Step 2: Update using the measurement {$z(t+ \Delta t)$}}

Given $z(t+ \Delta t)$, the Kalman gain is
\begin{equation}
K_{t+1} = \frac{P(t+ \Delta t \mid t)}{P(t+ \Delta t \mid t) + R},
\end{equation}
where $R$ denotes the measurement-noise variance.

The updated estimate is
\begin{equation}
\hat v_h(t+ \Delta t \mid t+ \Delta t)
=
\hat v_h(t+ \Delta t \mid t)
+
K_{t+1}\big[z(t+ \Delta t) - \hat v_h(t+ \Delta t \mid t)\big],
\end{equation}
and the associated covariance becomes
\begin{equation}
P(t+ \Delta t \mid t+ \Delta t) =
\big(1 - K_{t+1}\big)P(t+ \Delta t \mid t).
\end{equation}

This update smooths noisy speed measurements and provides a robust estimate of the host vehicle’s longitudinal speed for use in the \ac{ACC} controller.

\subsection{Identifying the Host-Vehicle Speed Threshold That Violates ACC}

This subsection proves the first lemma, which specifies the host-vehicle speed threshold that violates \ac{ACC}. To do so, we first state the following assumption.

\begin{assumption}\label{as:3}
The inter-vehicle gap after one sampling step $\Delta t$ is
\[
d(t+ \Delta t) = d(t) + \big(v_l(t+ \Delta t) - v_{h}(t+ \Delta t)\big)\, \Delta t.
\]
Note that we compute distance in meters and we use the speed at the end of the sampling period.
\end{assumption}

In practice, \ac{ACC} systems use radar or LiDAR to measure the range to the lead vehicle. Assumption~\ref{as:3} captures the one-step-ahead range evolution using the host vehicle’s expected speed at the next time step, computed under Assumption~\ref{as:2}. Accordingly, in predicting the inter-vehicle distance at $t+\Delta t$, we use the host vehicle's current effective (attack- and noise-free) speed and acceleration, rather than the potentially compromised or noisy speed measurement.

Recall that the desired safe distance is given by Eq.~\ref{eq:safedistance}:
\[
d_{\text{safe}}(t) = h\, v_h(t) + \frac{v_h(t)^2}{2a}.
\]

We distinguish between the effective host speed $v_h(t)$, which governs the true distance evolution, and the KF-updated speed $v_h^+(t)$ used by ACC to compute $d_{\text{safe}}$; attacks act by biasing $v_h^+$ while the true dynamics follow $v_h$.

\begin{lemma}[ACC speed violation at next-step gap] \label{lm:AccSpVilation}
A host vehicle maintains a safe distance to the lead vehicle, as formulated in Eq.~\ref{eq:safedistance}, if its speed $v_h^+(t)$ is less than or equal to the threshold
\[
\begin{split}
&v_{\text{thr}}(t+\Delta t)
= a\Bigg(-h\\
&+ \sqrt{h^2
+ \frac{2}{a}\big(d(t)+(v_\ell(t+\Delta t) - v_h(t+\Delta t))\Delta t\big)}\Bigg).
\end{split}
\]
\end{lemma}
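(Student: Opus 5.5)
The argument reduces to solving a quadratic inequality in the speed that \ac{ACC} uses to compute $d_{safe}$. The safety requirement at the next step is $d(t+\Delta t) \ge d_{safe}(t+\Delta t)$. Here the gap $d(t+\Delta t)$ evolves according to the effective dynamics of Assumption~\ref{as:3}, while $d_{safe}$ is evaluated at the KF-updated speed $v_h^+(t)$, as in the paragraph preceding the lemma. Let $x = v_h^+(t) \ge 0$ and $D = d(t)+\big(v_\ell(t+\Delta t)-v_h(t+\Delta t)\big)\Delta t$. Substituting Assumption~\ref{as:3} into Eq.~\ref{eq:safedistance}, the condition becomes
\[
f(x) := \frac{x^2}{2a} + h\,x - D \le 0 .
\]

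Next I analyze $f$. Because $a>0$, $f$ is a convex, upward-opening parabola. Multiplying by $2a$ gives $x^2 + 2ahx - 2aD = 0$, whose roots are $x_\pm = -ah \pm \sqrt{a^2h^2 + 2aD}$. Factoring $a$ out of the square root (using $a>0$) gives
\[
x_+ = a\left(-h+\sqrt{h^2+\tfrac{2}{a}D}\right),
\]
which is exactly $v_{\text{thr}}(t+\Delta t)$ in the statement. Since $x_- \le -ah < 0 \le x$, the set of nonnegative speeds with $f(x)\le 0$ is the interval $[0,x_+]$. Equivalently, on $[0,\infty)$ the function $f$ is strictly increasing, because $f'(x)=x/a+h>0$. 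Therefore $x \le x_+$ implies $f(x)\le f(x_+)=0$, that is, $d_{safe}(t+\Delta t)\le d(t+\Delta t)$. This is the claimed sufficiency, and the monotonicity also gives the converse.

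The main obstacle is not the algebra but two modeling points. First, the square root must be real. This requires $h^2 + 2D/a \ge 0$, which holds whenever $D\ge 0$, i.e., when the predicted gap is nonnegative (no collision has already occurred). I would state this as an implicit hypothesis: if $D<0$, no speed is safe and the threshold is vacuous. Second, the roles of the two speeds must be kept consistent. The effective speed $v_h(t+\Delta t)$ appears only inside $D$, through the physical gap. The estimate $v_h^+(t)$ is the variable being bounded, since it is what enters $d_{safe}$ in the controller. I would make this identification explicit, citing the distinction drawn just before the lemma, so that the inequality is read as a condition on the KF output. That reading is what the subsequent subproblem (2) requires.
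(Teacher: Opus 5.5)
Your proof is correct and follows the paper's route. Like the paper, you write $d(t+\Delta t)\ge d_{safe}(t+\Delta t)$, with the gap taken at the effective speed and $d_{safe}$ at $v_h^+(t)$, as a quadratic inequality in $v_h^+(t)$, and you identify $v_{\text{thr}}$ as its unique nonnegative root under a nonnegative-predicted-gap hypothesis (the paper assumes $D>0$). The only differences are cosmetic: you get the direction of the inequality from $f'(x)=x/a+h>0$ on $[0,\infty)$ rather than from $g(0)=c<0$ and $g\to+\infty$, and you prove the stated sufficiency direction directly rather than through the equivalence the paper writes out for the converse.
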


Note that the safe distance is evaluated at the updated speed $v_{h}^+(t)$.

\noindent{\bf Proof.} 
We seek conditions at time $t+\Delta t$ under which the gap falls below the safe distance (i.e., a violation occurs):
\begin{equation}\label{eq.17}
d(t+\Delta t) < d_{\text{safe}}(t+\Delta t).
\end{equation}
Under Assumption~3, the true (effective) distance evolution over one step is
\begin{equation}\label{eq:gap_update_eff}
d(t+\Delta t)= d(t) + \big(v_\ell(t+\Delta t)-v_h(t+\Delta t)\big)\Delta t.
\end{equation}
The ACC module computes the safe distance using the KF-updated host speed $v_h^+(t)$, i.e.,

\begin{equation}\label{eq:safe_eval_updated}
d_{\text{safe}}(t+\Delta t)= hv_h^+(t)+\frac{(v_h^+(t))^2}{2a}.
\end{equation}

Substituting \eqref{eq:gap_update_eff} and \eqref{eq:safe_eval_updated} into \eqref{eq.17} yields
\begin{equation}\label{eq.18}
\frac{1}{2a}(v_h^+(t))^2 + h\,v_h^+(t)
-\Big(d(t)+v_\ell(t+\Delta t)\Delta t - v_h(t+\Delta t) \Delta t\Big) > 0.
\end{equation}

Define

$g(v)= p v^2 + b v + c,$

with
$p=\frac{1}{2a} \ (>0),\qquad$

$b=h\qquad$

$c=-\Big(d(t)+(v_\ell(t+\Delta t) - v_h(t+\Delta t))\Delta t\Big).$

Then \eqref{eq.18} is equivalent to $g(v_h^+(t))>0$.
The roots of $g(v)=0$ are

\[
v = \frac{-b\pm \sqrt{b^2-4pc}}{2p}.
\]
Since $p=\frac{1}{2a}$, we have $2p=\frac{1}{a}$ and thus $\frac{1}{2p}=a$.
Moreover, assume the predicted next-step gap remains positive, i.e.,
\[
d(t)+\big(v_\ell(t+\Delta t)-v_h(t+\Delta t)\big)\Delta t > 0,
\]
which implies $c<0$.
Because $p>0$, $g(v)\to+\infty$ as $v\to\infty$, and since $g(0)=c<0$, the equation $g(v)=0$ has exactly one positive root.
Therefore, the (unique) positive root is

\begin{equation}\label{eq:thr_pos}
\begin{split}
&v_{\text{thr}}(t+\Delta t)
= a\Bigg(-h\\
&+ \sqrt{h^2
+ \frac{2}{a}\big(d(t)+(v_\ell(t+\Delta t) - v_h(t+\Delta t)\Delta t\big)}\Bigg).
\end{split}
\end{equation}

while the other root is negative and thus not physically meaningful.

Because $p>0, g(v)\to+\infty$ as $v\to\infty$, and $g(0)=c$. Under normal operating conditions with a positive gap and bounded speeds, $g(v)$ crosses zero exactly once at the positive root; consequently,
$g(v_h^+(t))>0 \quad \Longleftrightarrow \quad v_h^+(t) > v_{\text{thr}}(t+\Delta t)$.
Thus,
$v_h^+(t) > v_{\text{thr}}(t+\Delta t)
\quad \Longrightarrow \quad
d(t+\Delta t) < d_{\text{safe}}(t+\Delta t)$,
which completes the proof. \qed

That is, assume that the current distance is safe: $d(t) \ge d_{\text{safe}}(t)$. If the host speed exceeds the threshold in the next step ($v_h^+(t) > v_{\text{thr}}(t+\Delta t)$), then the predicted next step is unsafe ($d(t+\Delta t) < d_{\text{safe}}(t+\Delta t)$).

We note that a violation of the speed threshold does not necessarily imply a collision between the host and lead vehicles. However, a sequence of violations may progressively reduce the inter-vehicle gap, potentially culminating in a crash.

\subsection{Measured Host-Vehicle Speed Values Exceeding the Threshold in the Presence of a Kalman Filter}

We next study how Kalman filtering can cause the \emph{estimated} host speed to exceed $v_{\text{thr}}$ even when the true speed is lower, for example, due to spoofed measurements.

\begin{lemma}[ACC speed violation with a Kalman filter] \label{lm:AccMSpVilation}
Under a Kalman filter, the updated host-speed estimate exceeds the safety threshold $v_{\text{thr}}(t+\Delta t)$ (and may thus violate the safe-distance constraint) if the measured speed satisfies
\[
z(t+\Delta t) >
\frac{v_{\text{thr}}(t+\Delta t) - (1 - K_{t+1})\hat v_h(t \mid t)}{K_{t+1}}.
\]
\end{lemma}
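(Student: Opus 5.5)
The plan is to write the Kalman update from Section~\ref{sec:KF-prediction} in closed form and solve it for the measurement. The updated estimate is affine in $z(t+\Delta t)$ with positive slope $K_{t+1}$, so requiring the estimate to exceed $v_{\text{thr}}$ becomes a lower bound on $z(t+\Delta t)$. Lemma~\ref{lm:AccSpVilation} then turns this into a possible safe-distance violation.

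First I would combine the prediction step \eqref{eq:KLVt+1}, $\hat v_h(t+\Delta t\mid t)=\hat v_h(t\mid t)$, with the update equation. Rearranging gives
\[
\hat v_h^{+}(t)=\hat v_h(t+\Delta t\mid t+\Delta t)=(1-K_{t+1})\,\hat v_h(t\mid t)+K_{t+1}\,z(t+\Delta t),
\]
a convex combination of the previous posterior and the new (possibly spoofed) measurement.

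Next I would establish that $0<K_{t+1}\le 1$. This follows from $K_{t+1}=P(t+\Delta t\mid t)/(P(t+\Delta t\mid t)+R)$, provided $R>0$ or $R\ge 0$, and $P(t+\Delta t\mid t)=P(t\mid t)+Q>0$. The positivity of $P(t+\Delta t\mid t)$ holds as soon as $Q>0$ or the prior covariance is positive. I expect this to be the only delicate point. Strict positivity of $K_{t+1}$ is what allows division by $K_{t+1}$ without reversing the inequality, so I would state it explicitly as a standing condition, namely nondegenerate noise variances. In the degenerate case $K_{t+1}=0$ the measurement has no influence and the lemma is vacuous.

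Then the event $\hat v_h^{+}(t)>v_{\text{thr}}(t+\Delta t)$ is equivalent to
\[
K_{t+1}\,z(t+\Delta t)>v_{\text{thr}}(t+\Delta t)-(1-K_{t+1})\,\hat v_h(t\mid t).
\]
Dividing by $K_{t+1}>0$ yields exactly the stated bound on $z(t+\Delta t)$, and this step is an equivalence. Finally I would invoke Lemma~\ref{lm:AccSpVilation}, together with its standing assumption that the predicted next-step gap is positive. With the safe distance evaluated at $v_h^{+}(t)$, exceeding $v_{\text{thr}}(t+\Delta t)$ gives $d(t+\Delta t)<d_{\text{safe}}(t+\Delta t)$, which is the claimed possible violation.
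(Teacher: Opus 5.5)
Your proposal is correct and follows the paper's own proof sketch. Like the paper, you use the constant-velocity prediction $\hat v_h(t+\Delta t\mid t)=\hat v_h(t\mid t)$ to write the update as $(1-K_{t+1})\hat v_h(t\mid t)+K_{t+1}z(t+\Delta t)$, then divide by $K_{t+1}>0$; your derivation of $K_{t+1}>0$ from nondegenerate noise variances and your closing appeal to Lemma~\ref{lm:AccSpVilation} simply make explicit what the paper asserts without justification or leaves implicit.
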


\noindent{\bf Proof sketch.}
Consider a 1D Kalman filter for the host speed $v_h(t)$ with measurement model
\begin{equation}
z(t+\Delta t) = v_h(t+\Delta t) + m(t+\Delta t),
\end{equation}
where $z(t+\Delta t)$ is the measured (possibly spoofed) speed and $m(t+\Delta t)$ denotes measurement noise.

The prediction and update equations are
\begin{align}
\hat v_h^{-}(t) &= \hat v_h(t+\Delta t \mid t), \\[4pt]
\hat v_h^{+}(t) &= \hat v_h(t+\Delta t \mid t+\Delta t)\\
&= \hat v_h^{-}(t) + K_{t+1}\big(z(t+\Delta t) - \hat v_h^{-}(t)\big),
\end{align}
where $\hat v_h^{-}(t)$ is the predicted speed, $\hat v_h^{+}(t)$ is the updated speed, and $K_{t+1}$ is the Kalman gain at time $t+\Delta t$, with $K_{t+1}>0$. (Also $K_{t+1}<1$)

We seek a condition on $z(t+\Delta t)$ such that $\hat v_h^{+}(t) > v_{\text{thr}}(t+\Delta t)$. Substituting the update equation and rearranging terms yields
\begin{equation}
z(t+\Delta t) >
\frac{v_{\text{thr}}(t+\Delta t) - (1 - K_{t+1})\hat v_h^{-}(t)}{K_{t+1}}.
\end{equation}
Using $\hat v_h^{-}(t)=\hat v_h(t \mid t)$ gives the stated result.

This expression specifies the minimum (possibly spoofed) measurement required to push the Kalman-filtered estimate above the threshold $v_{\text{thr}}(t+\Delta t)$. Recall that the \ac{KF} model uses a constant-velocity assumption with process noise.

\subsubsection*{PID Control in Cruise and Following Modes}

For completeness, we recall the PID control laws for the two operating modes of \ac{ACC}. 
In \emph{cruise mode}, the controller regulates speed:
\begin{equation}
\begin{split}
u(t) = K_p^v \big(v_{\text{ref}} - \hat v_h(t)\big)\\
&+ K_i^v \int_0^t \big(v_{\text{ref}} - \hat v_h(t)\big)\, dt\\
&+ K_d^v \big(\dot{v}_{\text{ref}} - \dot{v}_h(t)\big),
\end{split}
\label{eq:PIDCrusieMode}
\end{equation}
while in \emph{following mode}, it regulates the inter-vehicle distance:
\begin{equation}
\begin{split}
u(t) = K_p^d \big(d(t) - d_{\text{ref}}(t)\big)\\
&+ K_i^d \int_0^t \big(d(t) - d_{\text{ref}}(t)\big)\, dt\\
&+ K_d^d \big(\dot{d}(t) - \dot{d}_{\text{ref}}(t)\big).
\end{split}
\label{eq:PIDFollwoingmode}
\end{equation}
where $d_{\text{ref}}(t)$ denotes the reference distance between the host and lead vehicles (e.g., the safe distance).

Note that the controllers in Eq.~\ref{eq:PIDCrusieMode} and Eq.~\ref{eq:PIDFollwoingmode} are special cases of the PID controller in Eq.~\ref{eq:PID}.

%% file: Sections/IDSACC.tex
We propose augmenting the in-vehicle network with an \ac{IDS} to detect attacks that target the \ac{ACC} pipeline. An \ac{IDS} monitors \ac{CAN} communication, identifies anomalies indicative of compromised components (e.g., spoofed speed signals or manipulated controller outputs), and raises an alert when malicious activity is detected~\cite{mubarek2023}. Numerous \ac{IDS} solutions have been proposed to mitigate cyberattacks targeting in-vehicle networks and automotive control systems~\cite{Lee2017OTIDS,Tariq2020CANADF,Hossain2020LSTMCanBus,Amato2021CANBusDL,9490207,8640808,8688625}. In the following, we discuss integrating an \ac{IDS} with \ac{ACC} and assess the efficacy of this approach.

\subsection{Augmenting ACC with IDS (ACC-IDS)}

Figure~\ref{fig:BlockDiagramACCWithIDS} illustrates the proposed \ac{ACC-IDS} architecture. As in conventional \ac{ACC}, the controller receives the distance to the lead vehicle $d(t)$, the host-vehicle speed $v_h(t)$, and the reference speed $v_{\text{ref}}(t)$ at each time step. We extend this design by incorporating an additional signal from the \ac{IDS}: a binary intrusion indicator. When an intrusion is detected, the \ac{ACC-IDS} overrides the nominal control action and enforces braking.

We model the \ac{IDS} output as $S(t)\in\{0,1\}$, where $S(t)=1$ denotes attack detection and $S(t)=0$ otherwise. When $S(t)=1$, the controller should disregard the current sensor measurement because it may be spoofed. A naive countermeasure is to reuse the last known valid speed measurement; equivalently, one may replace the current sensor reading $z(t+\Delta t)$ in the Kalman filter update with the prediction $\hat v_h(t+\Delta t\mid t)$. However, this strategy is fragile in adversarial settings, where spoofed measurements may be injected at a higher rate than genuine ones. Consequently, relying solely on stale measurements or prediction substitution may be insufficient to withstand sustained spoofing attacks.

Instead, we modify the cruise-mode controller in Eq.~\ref{eq:PIDIDSCruise} so that braking is enforced whenever the \ac{IDS} signals an intrusion and we apply the braking acceleration and use the last known safe speed--ignoring the recent speed value associated with the injection. This yields the following control law.

\begin{figure}[tb]
\centering
\includegraphics[width=0.50\textwidth]{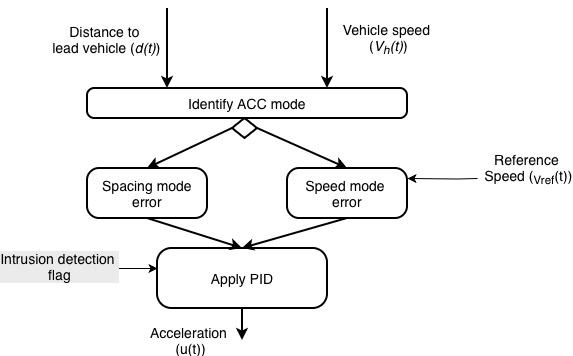}
\caption{Block diagram of ACC with IDS. The model augments the conventional ACC controller with an intrusion flag and a confidence value provided by the IDS.}
\label{fig:BlockDiagramACCWithIDS}
\end{figure}

\begin{equation}\label{eq:PIDIDSCruise}
\begin{split}
u(t) = S(t)\cdot(-a) \;+\;
(1-S(t))\cdot\Big[
&K_p^v \big(v_{\text{ref}} - \hat v_h(t)\big) \\
&+ K_i^v \int_0^t \big(v_{\text{ref}} - \hat v_h(t)\big)\, dt\\
&+ K_d^v \big(\dot{v}_{\text{ref}} - \dot{v}_h(t)\big)
\Big].
\end{split}
\end{equation}

We use a continuous-time PID controller and discretize it with sampling period $\Delta t$. Note that $a \ge 3.4\,\text{m/s}^2$~\cite{greenbook}.

\subsection{Assessing the Effectiveness of ACC Augmented with IDS}

\begin{assumption}\label{as:4}
The IDS employs a machine-learning model trained on CAN-bus data to identify violations involving the host-vehicle speed with accuracy $1.0$. The IDS does not rely on the distance between the host and lead vehicles. Specifically, the IDS flags the following condition as a security violation:
\[
z(t+\Delta t)>
\frac{v_{\text{thr}}(t+\Delta t) - (1 - K_{t+1})\hat v_h(t \mid t)}{K_{t+1}}
\;\;\Rightarrow\;\; S(t)=1.
\]
\end{assumption}

In general, an \ac{IDS} for the in-vehicle network can flag speed values that fall below or exceed the stated threshold as potential intrusions. However, speed values above the threshold should be detected and reported via plausibility checks (e.g., unusually large or rapid increases in the reported speed).

\begin{assumption}\label{as:5}\textbf{Bounded IDS detection delay.} 
The IDS detects attacks within at most $N$ time steps, i.e.,
\[
N\Delta t \le h.
\]
\end{assumption}

\begin{assumption}\label{as:6}\textbf{IDS latch.}
Once the IDS detects an attack at time $\bar t$, the override remains active thereafter:
\[
\forall k\in\mathbb Z_{\ge0},\qquad S(\bar t+k\Delta t)=1.
\]

\end{assumption}

\begin{assumption}\label{as:7}
At the time the ACC--IDS is engaged (initial condition),
\[
d(0) > 0,
\qquad
d(0) \ge d_{\text{safe}}(0).
\]
\end{assumption}

\begin{assumption}\label{as:10}
The lead-vehicle speed is all the time positive.
\[
\forall t \;\; v_{l}(t) \ge 0.
\]
\end{assumption}

\begin{assumption}\label{as:11}
Detection-window host-speed upper bound (worst case).

Letting
\[
M := \frac{\bar t-t^*}{\Delta t}\in \mathbb{Z}_{\ge 0},
\]
there exists a scalar $\bar v$ such that for all $j=1,\dots,M$,
\[
v_h(t^*+j\Delta t)\le \bar v,
\qquad\text{and}\qquad
\bar v \le v_h(t^*).
\]
\end{assumption}

\begin{assumption}\label{as:12}
At the attack start time $t^*$,
\[
d(t^*)\ge d_{\mathrm{safe}}(t^*).
\]
\end{assumption}

Although an ML-based \ac{IDS} is a software component, its latency can be bounded given \ac{CAN} message-rate limits. Assumption~\ref{as:5} is chosen (as a sufficient condition for the theorem) to ensure that the \ac{IDS} detection delay does not exceed the “reaction time” captured by the first term of $d_{\text{safe}}$. Several \ac{IDS} designs proposed in the literature satisfy Assumptions~\ref{as:5} and~\ref{as:6}, including~\cite{JOS2024}. Assumption~\ref{as:4} is not fully realistic; however, accuracies in the range of 0.97–0.99 are commonly reported in the literature, including in our previous work~\cite{9490207}. We report in Section~\ref{sec:simaccuracy} the impact of IDS accuracy on the effectiveness of the \ac{ACC-IDS} controller.

\begin{figure*}[bth]
  \centering
  \begin{subfigure}[b]{0.48\linewidth}
    \centering
    \includegraphics[width=\linewidth]{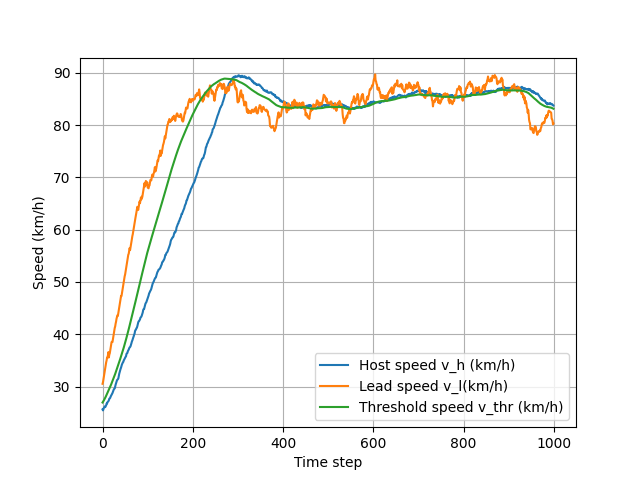}
    \caption{Visualization of the host vehicle speed when using Kalman Filter.}
    \label{fig:speedwithKF}
  \end{subfigure}
  ~
    \begin{subfigure}[b]{0.48\linewidth}
    \centering
    \includegraphics[width=\linewidth]{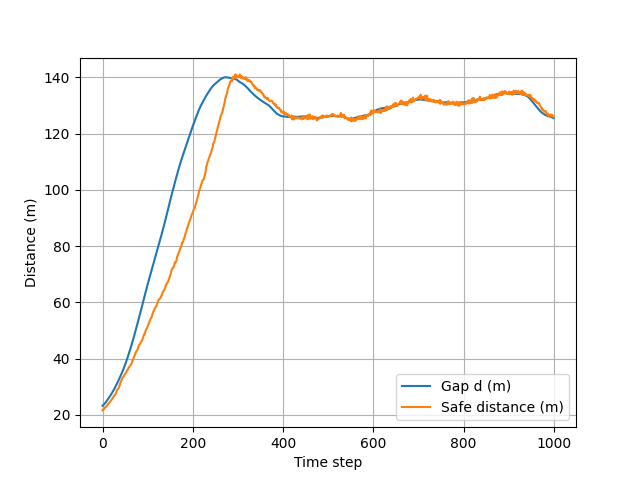}
    \caption{Visualization of the safe distance and gap between the host vehicle speed and lead vehicle when using Kalman Filter.}
    \label{fig:distancewithKF}
  \end{subfigure}
  \caption{Simulation of Adaptive Cruise Control of a host vehicle and lead vehicle where the host vehicle uses Kalman Filter (KF) for the speed readings.}
  \label{fig:SimulationewithKF}
\end{figure*}

\begin{theorem}[IDS braking-distance assures collision avoidance]\label{thm:AccAIDSafe}
Consider the discrete-time longitudinal model with sampling period $\Delta t$:
\begin{align}
v_h(t+\Delta t) &= v_h(t) + u(t)\Delta t, \label{eq:vh_update_thm}\\
d(t+\Delta t) &= d(t) + \bigl(v_\ell(t+\Delta t)-v_h(t+\Delta t)\bigr)\Delta t. \label{eq:d_update_thm}
\end{align}
Define the standard ACC safe-distance decomposition
\begin{align}
d_{\mathrm{safe}}(t) &= d_{\mathrm{react}}(t)+d_{\mathrm{brake}}(t), \label{eq:safe_decomp_thm}\\
d_{\mathrm{react}}(t) &= h\,v_h(t), \\
d_{\mathrm{brake}}(t) &= \frac{v_h(t)^2}{2a}.
\end{align}
Let $t^*$ denote the attack start time and let $\bar t$ denote the (first) IDS detection time.
Assume the ACC--IDS controller enforces emergency braking whenever $S(t)=1$, i.e.,
\[
S(t)=1 \ \Rightarrow\ u(t)=-a.
\]

Suppose that Assumptions~\ref{as:1}-\ref{as:12} hold. Then:

\textup{(A)} \textbf{Braking-distance safety at detection.}
\begin{equation}
d(\bar t)\ \ge\ d_{\mathrm{brake}}(\bar t)=\frac{v_h(\bar t)^2}{2a}.
\label{eq:brake_at_detect_thm}
\end{equation}

\textup{(B)} \textbf{Braking-distance safety for all later times (hence collision avoidance).}
For every integer $k\ge 0$,
\begin{equation}
d(\bar t+k\Delta t)\ \ge\ \frac{v_h(\bar t+k\Delta t)^2}{2a}\ \ge\ 0.
\label{eq:invariant_brake_thm}
\end{equation}
In particular, the inter-vehicle gap never becomes negative, so a collision is avoided.
\end{theorem}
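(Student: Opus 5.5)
Part (A) comes from a gap budget over the detection window $[t^*,\bar t]$, and part (B) from an induction under the latched braking law.

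\emph{Part (A).} We begin at the attack start $t^*$, where Assumption~\ref{as:12} gives
\[
d(t^*)\ge h\,v_h(t^*)+\frac{v_h(t^*)^2}{2a}.
\]
During the window the controller may be driven by spoofed data, so we do not use the control law there. We rely only on the kinematics \eqref{eq:d_update_thm}, the bound $v_\ell\ge 0$ from Assumption~\ref{as:10}, and the worst-case bound $v_h(t^*+j\Delta t)\le\bar v\le v_h(t^*)$ from Assumption~\ref{as:11}.

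Telescoping \eqref{eq:d_update_thm} over $j=1,\dots,M$ gives
\[
d(\bar t)\ \ge\ d(t^*)-\sum_{j=1}^{M} v_h(t^*+j\Delta t)\,\Delta t\ \ge\ d(t^*)-M\Delta t\,\bar v .
\]
By Assumption~\ref{as:5}, $M\Delta t\le N\Delta t\le h$, where I read $M\le N$ as the content of the bounded-delay assumption. Since $\bar v\le v_h(t^*)$, the loss is at most the reaction term $h\,v_h(t^*)$. Hence
\[
d(\bar t)\ge \frac{v_h(t^*)^2}{2a}.
\]
Finally, $v_h(\bar t)\le\bar v\le v_h(t^*)$, with both speeds nonnegative (the $j=M$ case of Assumption~\ref{as:11}). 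Therefore $v_h(\bar t)^2\le v_h(t^*)^2$, which gives \eqref{eq:brake_at_detect_thm}. The degenerate case $M=0$, i.e.\ $\bar t=t^*$, follows directly from Assumption~\ref{as:12}.

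\emph{Part (B).} This is an induction on $k$ with \eqref{eq:brake_at_detect_thm} as the base case. By Assumption~\ref{as:6}, $S=1$ for all $t\ge\bar t$, so $u=-a$. Writing $v_k=v_h(\bar t+k\Delta t)$ and $d_k=d(\bar t+k\Delta t)$, we have $v_{k+1}=v_k-a\Delta t$. I would interpret this with saturation at zero, $v_{k+1}=\max(v_k-a\Delta t,0)$, since a braking vehicle does not reverse. I expect this modelling point to need an explicit remark, because the raw update can go negative.

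Assume $d_k\ge v_k^2/(2a)$. Using $v_\ell\ge0$,
\[
d_{k+1}\ge d_k - v_{k+1}\Delta t\ge \frac{v_k^2}{2a}-v_{k+1}\Delta t .
\]
So it suffices to show $v_k^2-2a v_{k+1}\Delta t\ge v_{k+1}^2$. If $v_{k+1}=0$ this is trivial. Otherwise $v_k=v_{k+1}+a\Delta t$, and
\[
v_k^2=v_{k+1}^2+2a\Delta t\,v_{k+1}+a^2\Delta t^2\ \ge\ v_{k+1}^2+2a\Delta t\,v_{k+1}.
\]
The slack $a^2\Delta t^2$ comes from the fact that \eqref{eq:d_update_thm} uses the end-of-step speed. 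This closes the induction, and nonnegativity of $v_{k}^2/(2a)$ gives $d_k\ge0$, hence no collision.

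\emph{Main obstacle.} The delicate step is (A). The theorem rests entirely on Assumption~\ref{as:11}, and I would check carefully that it covers the endpoint $\bar t$. I would also check that the window length $M$ is indeed bounded by $N$, so that the reaction-distance term $h\,v_h(t^*)$ absorbs the entire gap loss. Part (B) is routine once the saturation convention for $v_h$ is fixed.
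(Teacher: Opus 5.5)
Your proposal is correct and follows the paper's proof essentially step for step. For (A) you use the same telescoped gap bound over the detection window, via $v_\ell\ge 0$, $\bar v$, $M\Delta t\le N\Delta t\le h$ and $d(t^*)\ge d_{\mathrm{safe}}(t^*)$, and for (B) your induction is the paper's braking-margin identity $\Phi(t+\Delta t)=\Phi(t)+v_\ell(t+\Delta t)\Delta t+\tfrac{a\Delta t^2}{2}$ with $\Phi=d-v_h^2/(2a)$, written as an inequality. The saturation at zero is not needed: the paper uses the raw update $v_h(t+\Delta t)=v_h(t)-a\Delta t$, and your ``otherwise'' algebra, based on $v_k=v_{k+1}+a\Delta t$, already holds when $v_{k+1}$ is negative.
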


{\bf Proof}
\textbf{Step 1: Detection delay consumes (at most) the reaction-distance term.}
For any sampling instant $s\in[t^*,\,\bar t-\Delta t]$, \eqref{eq:d_update_thm} gives
\[
d(s+\Delta t)=d(s)+\bigl(v_\ell(s+\Delta t)-v_h(s+\Delta t)\bigr)\Delta t.
\]
Using Assumption~9 ($v_\ell(\cdot)\ge 0$) yields the one-step lower bound
\[
d(s+\Delta t)\ge d(s)-v_h(s+\Delta t)\Delta t.
\]
Summing over the $M:=\frac{\bar t-t^*}{\Delta t}\in\mathbb Z_{\ge 0}$ steps from $t^*$ to $\bar t$ yields
\[
d(\bar t)\ge d(t^*)-\sum_{j=1}^{M} v_h(t^*+j\Delta t)\Delta t. \tag{34}
\]
Applying Assumption~10 gives
\[
d(\bar t)\ge d(t^*)-\bar v\,M\Delta t \ \ge\ d(t^*)-\bar v\,N\Delta t, \tag{35}
\]
since $M\le N$ from Assumption~5. Using $N\Delta t\le h$ yields
\[
d(\bar t)\ge d(t^*)-h\bar v. \tag{36}
\]
By Assumption~10, $\bar v\le v_h(t^*)$, hence $h\bar v\le h v_h(t^*)=d_{\mathrm{react}}(t^*)$, and therefore
\[
d(\bar t)\ge d(t^*)-d_{\mathrm{react}}(t^*). \tag{37}
\]
By Assumption~11, $d(t^*)\ge d_{\mathrm{safe}}(t^*)=d_{\mathrm{react}}(t^*)+d_{\mathrm{brake}}(t^*)$.
Combining with (37) yields
\[
d(\bar t)\ge d_{\mathrm{brake}}(t^*)=\frac{v_h(t^*)^2}{2a}. \tag{38}
\]
Finally, Assumption~10 implies $v_h(\bar t)\le \bar v\le v_h(t^*)$, hence
$d_{\mathrm{brake}}(\bar t)\le d_{\mathrm{brake}}(t^*)$, so (38) implies \eqref{eq:brake_at_detect_thm}.
This proves (A).

\medskip
\textbf{Step 2: After detection, braking-distance safety is invariant.}
By Assumption~6 (IDS latch), $S(\bar t+k\Delta t)=1$ for all integers $k\ge 0$,
hence the controller enforces $u(t)=-a$ at all sampling instants $t=\bar t+k\Delta t$, i.e.
\[
v_h(t+\Delta t)=v_h(t)-a\Delta t.
\]
Define the braking margin $\Phi(t):=d(t)-\frac{v_h(t)^2}{2a}$.
Using \eqref{eq:d_update_thm} and $v_h(t+\Delta t)=v_h(t)-a\Delta t$,
\begin{align*}
\Phi(t+\Delta t)
&= d(t)+\bigl(v_\ell(t+\Delta t)-v_h(t+\Delta t)\bigr)\Delta t \\
&- \frac{(v_h(t)-a\Delta t)^2}{2a} \\
&= \Phi(t)+v_\ell(t+\Delta t)\Delta t+\frac{a\Delta t^2}{2}.
\end{align*}
By Assumption~\ref{as:10}, $v_\ell(t+\Delta t)\ge 0$, so $\Phi(t+\Delta t)\ge \Phi(t)$ for all $t\ge \bar t$.
From (A), $\Phi(\bar t)\ge 0$, hence $\Phi(\bar t+k\Delta t)\ge 0$ for all $k\ge 0$,
which is equivalent to \eqref{eq:invariant_brake_thm}. Since $\frac{v_h(\cdot)^2}{2a}\ge 0$,
we also have $d(\bar t+k\Delta t)\ge 0$ for all $k\ge 0$, so a collision is avoided.

\begin{figure*}
  \centering
  \begin{subfigure}[b]{0.48\linewidth}
    \centering
    \includegraphics[width=\linewidth]{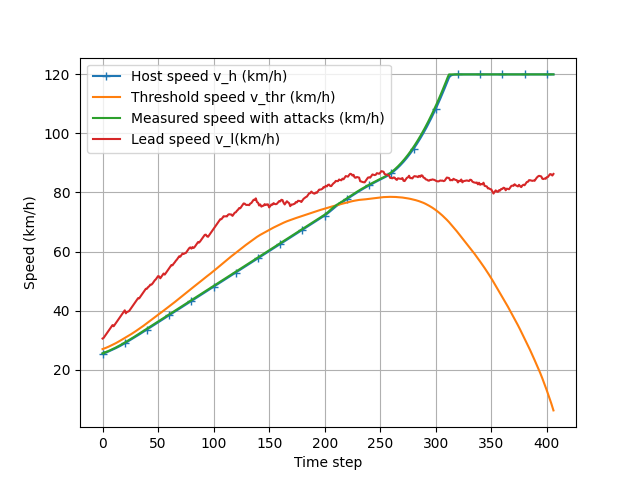}
    \caption{Visualization of the host vehicle speed when using attack injection.}
    \label{fig:speedwithattacks}
  \end{subfigure}
  ~
    \begin{subfigure}[b]{0.48\linewidth}
    \centering
    \includegraphics[width=\linewidth]{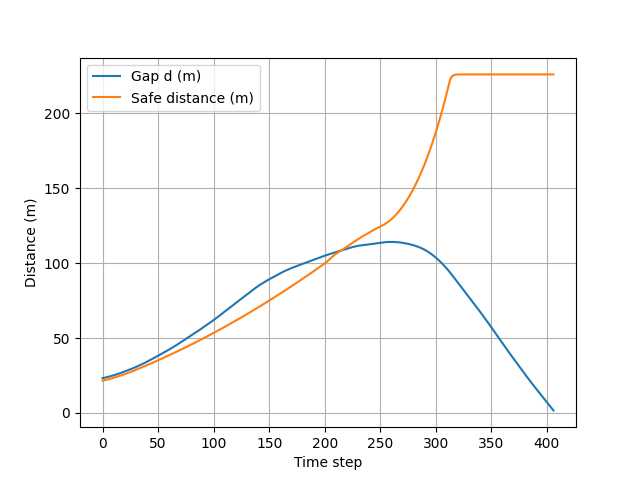}
    \caption{Visualization of the gap between the actual distance and required distance between the host and lead vehicle~\ref{lm:AccSpVilation}}
    \label{fig:distancewithAttacks}
  \end{subfigure}
  \caption{Simulation of Adaptive Cruise Control of a host vehicle and lead vehicle where the host vehicle uses Kalman Filter (KF) for the speed readings in presence of speed injection attacks.}
  \label{fig:SimulationwithAttacks}
\end{figure*}

%% file: Sections/Simulation.tex
This section presents simulation results for \ac{ACC} under three conditions: (i) \ac{ACC} with a \ac{KF}, (ii) \ac{ACC} with a \ac{KF} under speed-injection attacks, and (iii) \ac{ACC} with a \ac{KF} augmented by an \ac{IDS}.

\subsection{Simulation Setup and Data Generation}

We generate simulation data using a discrete-time longitudinal driving simulator that models the interaction between a host vehicle equipped with \ac{ACC} and a lead vehicle traveling in the same lane. The simulator at a fixed sampling period and computes vehicle speed, acceleration, and inter-vehicle distance at each time step. The lead vehicle follows a cruise-control strategy with occasional random braking events, while the host vehicle applies an \ac{ACC} controller that switches between speed tracking (based on speed error) and spacing control (based on distance error).

Each simulation starts from safe initial conditions: the initial inter-vehicle gap is set to $1.1$ times the computed safe following distance. The host vehicle speed measurements are processed by a \ac{KF}, which estimates the vehicle state from measurements corrupted by noise and, under attack, by adversarial manipulation. We model the attack as a sustained sequence of constant positive biases added to the reported speed over a fixed time interval. An \ac{IDS} module is integrated alongside the \ac{KF} to detect abnormal measurement behavior and enforce a fail-safe response.

During each run, the simulator records time-series data including true and estimated speeds, manipuvoids collisionlated measurements, threshold speeds, inter-vehicle gaps, and computed safe distances. The code of the simulation is available at~\cite{OtPr2025}.

\subsection{Simulation of Adaptive Cruise Control with Kalman Filter and No Speed-Injection Attacks}

\begin{figure*}
\centering
\begin{subfigure}[b]{0.48\linewidth}
\centering
\includegraphics[width=\linewidth]{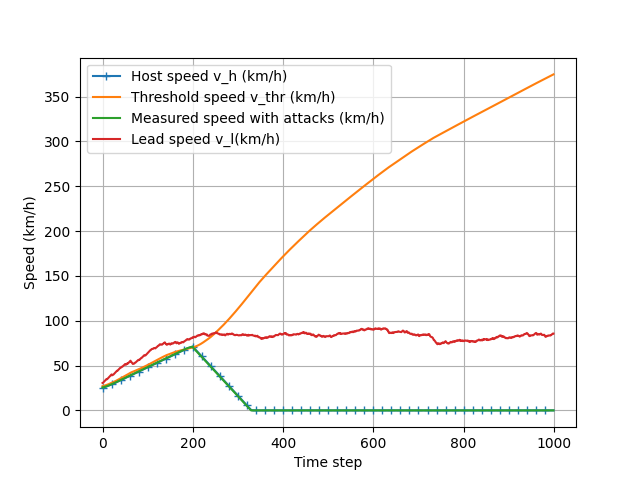}
\caption{Visualization of the host vehicle speed when using \ac{IDS}.}
\label{fig:speedwithIDS}
\end{subfigure}
~
\begin{subfigure}[b]{0.48\linewidth}
\centering
\includegraphics[width=\linewidth]{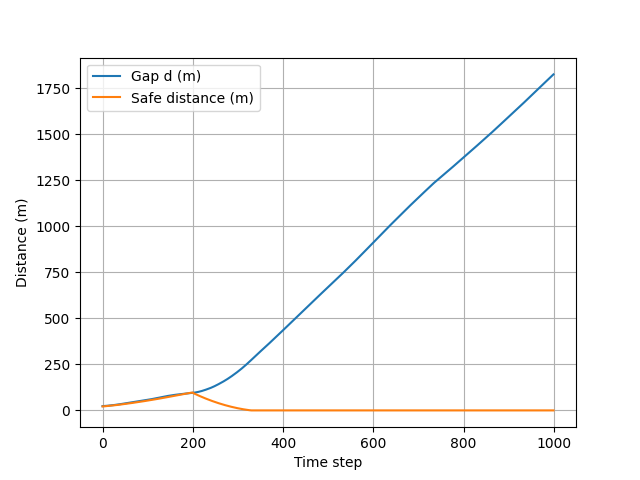}
\caption{Visualization of the gap between the actual distance and required distance between the host and lead vehicle.}
\label{fig:distancewithIDS}
\end{subfigure}
\caption{Simulation of \ac{ACC} augmented with an \ac{IDS}: the host vehicle uses a \ac{KF} for speed estimation under speed-injection attacks.}
\label{fig:SimulationwithIDS}
\end{figure*}

Figure~\ref{fig:SimulationewithKF} shows the simulation results for \ac{ACC} when a \ac{KF} is used under nominal measurement noise (i.e., no attacks). Subfigure~\ref{fig:speedwithKF} plots the host vehicle speed, lead vehicle speed, and the speed safety threshold. Initially, the host speed closely follows the lead vehicle speed while remaining below the threshold, indicating safe operation in speed-tracking mode. When the lead vehicle reaches its maximum speed, it constrains the host’s target speed and the controller transitions to spacing control. We observe occasional brief exceedances of the threshold speed.

Subfigure~\ref{fig:distancewithKF} compares the actual inter-vehicle gap with the required safe distance. When the controller operates in spacing control, the gap and safe distance converge and remain close. Despite the occasional threshold exceedances, \ac{ACC} maintains a safe following distance and avoids collisions.

\subsection{Simulation of Adaptive Cruise Control with Kalman Filter Under Speed-Injection Attacks}

Figure~\ref{fig:SimulationwithAttacks} presents results under a speed-injection attack when only the \ac{KF} is used. Subfigure~\ref{fig:speedwithattacks} shows the host and lead vehicle speeds, the threshold speed, and the manipulated (attacked) speed measurement. After the attack begins, the injected bias causes the measured host speed to increase progressively, eventually reaching the maximum speed. The threshold speed begins to decrease once the host speed exceeds the threshold. The \ac{ACC} uses the manipulated speed to compute both the safe distance and the acceleration commands.

Subfigure~\ref{fig:distancewithAttacks} compares the actual inter-vehicle gap with the safe following distance computed from the manipulated measurement. Under sustained small biases, the \ac{KF} becomes ineffective: the sequence of speed increases initially enlarges the gap, but the gap then rapidly collapses, drops below the required safe distance, and approaches zero, indicating a collision. This divergence between the true gap and the computed safe distance highlights how adversarial speed manipulation can drive unsafe behavior. Overall, the plots show that a \ac{KF} alone is insufficient to prevent collisions when speed measurements are maliciously altered.

\subsection{Simulation of Adaptive Cruise Control Augmented with Intrusion Detection}

Figure~\ref{fig:SimulationwithIDS} illustrates the effectiveness of augmenting \ac{ACC} with an \ac{IDS} during speed spoofing attacks. Subfigure~\ref{fig:speedwithIDS} shows the host and lead vehicle speeds, the manipulated speed measurement, and the threshold speed when the \ac{IDS} is enabled. Although the attacker injects biased measurements that deviate from the true vehicle dynamics, the \ac{IDS} triggers a fail-safe response: it overrides the \ac{ACC} command by suppressing throttle input and enforcing maximum braking, preventing further acceleration.

Subfigure~\ref{fig:distancewithIDS} compares the actual inter-vehicle gap with the required safe following distance. Unlike the attack-only case, the gap increases monotonically and remains well above the safe distance following \ac{IDS} intervention, eliminating collision risk. These results indicate that IDS-assisted control can enforce fail-safe behavior even when Kalman-filter-based estimation becomes unreliable under adversarial sensing.

\subsection{Impact of the Accuracy of the Machine Learning Model Used by the IDS}
\label{sec:simaccuracy}
Across the three scenarios, the figures show results from a representative simulation run of 1000 time steps. Although each plot depicts a single run, we repeated the simulations multiple times and obtained consistent high-level outcomes—most importantly, whether a collision occurred—thereby supporting the theoretical results.

Assuming perfect \ac{IDS} accuracy enables a qualitative assessment of the benefit of augmenting \ac{ACC} with intrusion detection. In practice, however, \ac{IDS} solutions for in-vehicle networks typically achieve high but non-unit accuracy, which makes the guarantee in Theorem~\ref{thm:AccAIDSafe} probabilistic. Importantly, \ac{ACC} failure is generally driven by repeated injections of manipulated speed measurements rather than a single corrupted value; therefore, collisions may still be avoided even when \ac{IDS} accuracy is below 1. 

\begin{figure}[tb]
    \centering
    \includegraphics[width=\linewidth]{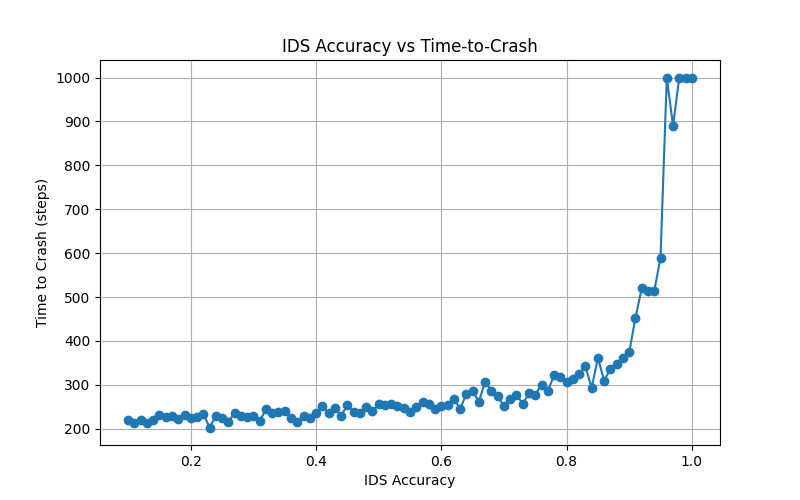}
    \caption{Relationship between IDS accuracy and time to crash (in steps).}
    \label{fig:IDSAccuracyvsCrashes}
\end{figure}

Figure~\ref{fig:IDSAccuracyvsCrashes} illustrates the relationship between IDS accuracy (0.1--1.0) and time to crash (1--1000 steps). Although the plot shows results from a single run per simulation configuration (rather than an average over repeated runs), we observed minimal variation across repeated trials, indicating that the trends are consistent. Overall, the relationship appears approximately parabolic: time to crash increases as IDS accuracy improves, with a pronounced gain in resilience once accuracy exceeds roughly 0.95. Future work will include varying the parameters and types of attacks and braking comfort level, and likely other parameters, on the effectiveness of the solution.

%% file: Sections/Conclusion.tex
We study in this paper spoofed injection attacks against \ac{ACC} and we argue that a \ac{KF} only tolerates injected speed values up to a bounded threshold. Beyond this threshold, the ACC may make unsafe acceleration decisions and crash. The paper (i) derives a next-step speed threshold that separates safe from unsafe evolution of the inter-vehicle gap (Lemma 4.1), (ii) derives a condition on the measured speed that can push the \ac{KF} estimate above this threshold (Lemma 4.2), and (iii) proposes an ACC–IDS architecture in which the IDS triggers enforce braking upon detecting an injection via a modified cruise-mode controller (Eq. 31), along with a safety theorem under several assumptions, including perfect IDS accuracy and bounded detection delay.

We conclude that a \acf{KF} alone is insufficient to mitigate speed fault-injection attacks and cannot reliably prevent such attacks from undermining the \acf{ACC} system’s collision-avoidance capability with respect to the lead vehicle.